\newtheorem{lemma}{Lemma}[section]
\newcommand{\ignore}[1]{}
\newcommand{\comment}[1]{\textcolor{red}{[#1]}}
\DeclareMathOperator{\InsertOp}{\textsc{Insert}}
\newcommand{\Insert}[3]{\InsertOp\left({#1,#2,#3}\right)}
\DeclareMathOperator{\DecreaseKeyOp}{\textsc{DecreaseKey}}
\newcommand{\DecreaseKey}[3]{\DecreaseKeyOp\left({#1,#2,#3}\right)}
\DeclareMathOperator{\DeleteMinOp}{\textsc{DeleteMin}}
\newcommand{\DeleteMin}[1]{\DeleteMinOp\left({#1}\right)}
\newcommand{\rank}[1]{r({#1})}
\newcommand{\maxrank}[1]{m({#1})}
\begin{document}

\title{A Back-to-Basics Empirical Study of Priority Queues\thanks{Research at
Princeton University partially supported by NSF	grant CCF-0832797 and a Google
PhD Fellowship.}}
\author{Daniel~H.~Larkin\thanks{Princeton University,
Department of Computer Science.
	Email: {\tt dhlarkin@cs.princeton.edu}.}
	\and
	Siddhartha Sen\thanks{Microsoft Research SVC. Email:
	\texttt{sisen@microsoft.com}}
	\and
	Robert~E.~Tarjan\thanks{Princeton University, Department of Computer Science
	and Microsoft Research SVC. Email: {\tt ret@cs.princeton.edu}.}}

\maketitle

\begin{abstract}
The theory community has proposed several new heap variants in the recent past
which have remained largely untested experimentally.  We take the field back to
the drawing board, with straightforward implementations of both classic and
novel structures using only standard, well-known optimizations.  We study the
behavior of each structure on a variety of inputs, including artificial
workloads, workloads generated by running algorithms on real map data, and workloads from a
discrete event simulator used in recent systems networking research.  We provide
observations about which characteristics are most correlated to performance.
For example, we find that the L1 cache miss rate appears to be strongly
correlated with wallclock time.  We also provide observations about how the input sequence
affects the relative performance of the different heap variants.  For example,
we show (both theoretically and in practice) that certain random
insertion-deletion sequences are degenerate and can lead to misleading results.
Overall, our findings suggest that while the conventional wisdom holds in some
cases, it is sorely mistaken in others.
\end{abstract}

\section{Introduction}

The priority queue is a widely used abstract data structure.  Many theoretical
variants and implementations exist, supporting a varied set of operations with differing
guarantees.  We restrict our attention to the following base set of commonly
used operations:

\begin{itemize}[leftmargin=*]
\setlength{\itemsep}{0pt}
\setlength{\parskip}{0pt}
	\item $\Insert{Q}{x}{k}$ --- insert item $x$ with key $k$ into heap $Q$ and
	return a handle $\bar x$
	\item $\DeleteMin{Q}$ --- remove the item of minimum key from heap $Q$ and
	return its corresponding key $k$
	\item $\DecreaseKey{Q}{\bar x}{k'}$ --- given a handle $\bar x$, change the key
	of item $x$ belonging to heap $Q$ to be $k'$, where $k'$ is guaranteed to be less than the original key $k$
\end{itemize}

It has long been known that either $\InsertOp$ or $\DeleteMinOp$ must take
$\OW{\log n}$ time due to the classic lower bound for sorting
\cite{KnuthSorting}, but that the other operations can be done in $\OO{1}$ time.
In practice, the worst-case of $\log n$ is often not encountered or can be
treated as a constant, and for this reason simpler structures with logarithmic
bounds have traditionally been favored over more complicated,
constant-time alternatives.
In light of recent developments in the theory community
\cite{ElmasryViolation,EdelkampWeak,HaeuplerRankPairing,ChanQuake,BrodalStrict}
and the outdated nature of the most widely cited experimental studies on
priority queues \cite{MoretExp,StaskoExp,LadnerCacheHeaps}, we aim to revisit
this area and reevaluate the state of the art.  More recent
studies~\cite{BruunPolicy,ElmasryFat,EdelkampWeak} have been narrow in focus
with respect to the implementations considered (e.g., comparing a single new heap to a few classical ones), the workloads tested (e.g.,
using a few synthetic tests), or the metrics collected (e.g.,
measuring wallclock time and element comparisons).  In addition to the normal
metric of wallclock time, we have collected additional metrics such as branching
and caching statistics.  Our goal is to identify experimentally verified trends
which can provide guidance to future experimentalists and theorists alike.  We
stress that this is not the final word on the subject, but merely another line
in the continuing dialogue.

In implementing the various heap structures, we take a different approach
from the existing algorithm engineering literature, in that we do not perform
any algorithm engineering. That is, our implementations are intentionally
straightforward from their respective descriptions in the original papers.
The lack of considerable tweaking and algorithm engineering in this
study is, we believe, an example of na\"ivet\'e as a virtue.  We expect that
this would accurately reflect the strategy of a practitioner seeking to make
initial comparisons between different heap variants.  As a sanity check, we also
compare our implementations with a state-of-the-art, well-engineered
implementation often cited in the literature.

Our high-level findings can be summarized as follows.  We find that wallclock
time is highly correlated with the cache miss rate, especially in the L1 cache.
High-level theoretical design decisions---such as whether to use an
array-based structure or a pointer-based one---have a significant impact on
caching, and which design fares best is dependent on the specific workload.
For example, Fibonacci heaps sometimes outperform implicit $d$-ary heaps, in
contradiction to conventional wisdom.  Even a well-engineered implementation
like Sanders' sequence heap~\cite{SandersSequence} can be bested by our
untuned implementations if the workload favors a different method.

Beyond caching behavior, those heaps with the simplest implementations tend to
perform very well.  It is not always the case that a theoretically superior
or simpler structure lends itself to simpler code in practice.
Pairing heaps dominate Fibonacci heaps across the board, but
interestingly, recent theoretical simplifications to Fibonacci heaps tend to do
worse than the original structure.

Furthermore we found that a widely-used benchmarking workload is degenerate in a
certain sense.  As the sequence of operations progresses, the distribution of
keys in the heap becomes very skewed towards large keys, contradicting the
premise that the heap contains a uniform distribution of keys. This can be shown
both theoretically and in practice.

Our complete results are detailed in Sections~\ref{sec:results}
and~\ref{sec:sanity}. We first describe the heap variants
we implemented in Section~\ref{sec:heaps}, and then discuss our experimental
methodology and the various workloads we tested in Section~\ref{sec:exp}. We
conclude in Section~\ref{sec:remarks} with some remarks.

\section{Heap Variants}
\label{sec:heaps}

Aiming to be broad, but not necessarily comprehensive, this study includes both
traditional heap variants and new variants which have not previously
undergone much experimental scrutiny.  We have implemented the following
structures, listed here in order of program length: implicit $d$-ary heaps, pairing heaps, Fibonacci heaps,
binomial queues, explicit $d$-ary heaps, rank-pairing heaps, quake heaps,
violation heaps, rank-relaxed weak queues, and strict Fibonacci heaps.
Table~\ref{tab:lloc} lists the logical lines of code; in our
experience, this order corresponded exactly to perceived programming difficulty.
There are several other heap variants which may be worth investigating, but
which have not been included in this study.  Among those not included are the 2-3 heap \cite{Takaoka23},
thin/thick heaps \cite{KaplanThin}, and the buffer heap \cite{ChowdhuryBuffer}.

Williams' binary heap \cite{WilliamsBinary} is the textbook example of a
priority queue.  Lauded for its simplicity and taught in undergraduate computer
science courses across the world, it is likely the most widely used variant
today.  Storing a complete binary tree whose nodes obey the heap order gives a
very rigid structure;
indeed, the heap supports all operations in worst-case $\OT{\log n}$ time.
The tree can be stored explicitly using heap-allocated nodes and pointers, or it
can be encoded implicitly as a level-order traversal in an array.  We refer to these variations as \textit{explicit} and \textit{implicit} heaps respectively.
The implicit heap carries a small caveat, such that in order to support
$\DecreaseKeyOp$ efficiently, we must rely on a level of indirection: encoding
the tree's structure as an array of node pointers and storing the current index
of a node's pointer in the node itself (allowing us to return the node pointer
to the client as $\bar x$).  This study includes two versions of implicit
heaps---one that supports $\DecreaseKeyOp$ through this indirection, and one
that doesn't.  We refer to the latter as the \textit{implicit-simple} heap.

Explicit and implicit heaps can be generalized beyond the binary
case to have any fixed branching factor $d$.  We refer to these heaps
collectively as $d$-ary heaps; this study examines the cases where
$d=2,4,8,16$.
To distinguish between versions with different branching factors, we label
the heaps in this fashion: \textit{implicit-2}, \textit{explicit-4},
\textit{implicit-simple-16}, and so forth.

\begin{table}[t]
\center
\caption{Programming effort}
\label{tab:lloc}
\vspace{5pt}
\begin{tabular}{l|c}
Heap variant & Logical lines of code (lloc) \\ \hline
implicit simple & 184 \\
pairing & 186\\
implicit & 194\\
Fibonacci & 282\\
binomial & 317 \\
explicit & 319 \\
rank-pairing & 376\\
quake & 383\\
violation & 481\\
rank-relaxed weak & 638\\
strict Fibonacci & 1009\\
\end{tabular}
\end{table}

Beyond the $d$-ary heaps, all other heap variants are primarily pointer-based
structures, though some make use of small auxiliary arrays.  All are conceptual
successors to Vuillemin's binomial queue \cite{VuilleminBinomial}.  Originally
developed to support efficient melding (which takes linear time in
$d$-ary heaps), we have included it in our study due to its simplicity.  The binomial
queue stores a forest of perfect, heap-ordered binomial trees of unique rank.
This uniqueness is maintained by linking trees of equal rank such that the root
with lesser key becomes the new parent.  To support deletion of a node, each of
its children is made into a new root, and the resulting forest is then
processed to restore the unique-rank invariant.  This can lead to a
fair amount of structural rearrangement, but the code to do so is
rather simple.  Key decreases are handled as in $d$-ary heaps by
sifting upwards.  Like $d$-ary heaps, binomial queues support all operations in
worst-case $\OT{\log n}$ time.

Most other heap variants can be viewed as some sort of relaxation of the
binomial queue, with the chronologically first one being the Fibonacci heap
\cite{FredmanFibonacci}.  The Fibonacci heap achieves amortized $\OO{1}$-time
$\InsertOp$ and $\DecreaseKeyOp$ by only linking after deletions and allowing
some imperfections in the binomial trees.  The imperfections are generated by
key decreases: instead of sifting, a node is cut from its parent and made into a new root.  To prevent the trees from becoming too malformed, a node is also cut from its parent as soon as it loses a second child.  This can lead to a
series of upwardly cascading cuts.  The violation heap \cite{ElmasryViolation}
and rank-pairing heaps \cite{HaeuplerRankPairing} can be viewed as further
relaxations of the Fibonacci heap.  The rank-pairing heaps allow rank
differences greater than one, and propagate ranks instead of cascading cuts so
that at most one cut is made per $\DecreaseKeyOp$.  Two rank rules were proposed
by the authors, leading to our implementations being labeled
\textit{rank-pairing-t1} and \textit{rank-pairing-t2}.  The violation heap also
propagates ranks instead, only considering rank differences in the two most
significant children of a node.  It allows two trees of each rank and utilizes a
three-way linking method.  The pairing heap \cite{FredmanPairing} is essentially
a self-adjusting, single-tree version of the Fibonacci heap, where ranks are not
stored explicitly, and linking is done eagerly.  Its amortized complexity is
still an open question, though it has been shown that $\DecreaseKeyOp$ requires
$\OW{\log \log n}$ time if all other operations are $\OO{\log n}$
\cite{FredmanBound}.  Two different amortization arguments can be used to prove
either $\OO{1}$ and $\OO{\log n}$ bounds for $\InsertOp$ and $\DecreaseKeyOp$
respectively \cite{IaconoBound} or $\OO{2^{2\sqrt{\log \log n}}}$ for both
operations \cite{PettieBound}.  It remains an open question to prove an
$\oo{\log n}$ bound for $\DecreaseKeyOp$ simultaneously with $\OO{1}$-time
$\InsertOp$.  All three relaxations are intended to be in some way simpler than
Fibonacci heaps, with the hope that this makes them faster in practice.

The strict Fibonacci heap \cite{BrodalStrict} on the other hand, intends to
match the Fibonacci time bounds in the worst case, rather than in an amortized
sense.  This leads to a fair amount of extra code to manage structural
imperfections somewhat lazily.  Rank-relaxed weak queues
\cite{EdelkampWeak} are essentially a tweaked version of rank-relaxed heaps,
with an emphasis on minimizing key comparisons.  They mark nodes as potentially
violating after a $\DecreaseKeyOp$ operation and clean them up lazily.  Quake
heaps \cite{ChanQuake} are a departure from the Fibonacci model, but are still
vaguely reminiscent of binomial queues.  A forest of uniquely-ranked tournament
trees is maintained.  Subtrees may be missing, but the number of nodes at a
given height decays exponentially in the height, a property guaranteed through a
set of global counters and a global rebuilding process triggered after
deletions.  There are multiple implementation strategies mentioned in the
original paper, but only the one that was fully detailed (the full
tournament representation) has been implemented here.  It is possible that the
other implementations would be more efficient.

\section{Experimental Design and Workloads}
\label{sec:exp}

Our codebase is written primarily in C99 and is available online for inspection,
modification, and further development~\cite{code}.
As we stated earlier, our implementations are intentionally straightforward from
their respective descriptions in the original papers, or use only the most
basic, well-known optimizations for the more studied structures.  Further optimization is left to
the compiler (\texttt{gcc -O4}) so as not to unfairly bias toward one variant or
another.


Keys are 64-bit unsigned integers, while the items
themselves are 32-bit unsigned integers.  In most cases,
the key actually consists of a 32-bit key in the high-order bits and the item
identifier in the low-order bits, in order to break ties during comparisons.

We experimented with different memory allocation schemes using our own
simple fixed-size memory pool implementation.  This abstraction layer allowed us
to allocate all memory eagerly using a single \texttt{malloc}, allocate lazily by
doubling when space fills, or allocate completely on the fly using a
\texttt{malloc} for each $\InsertOp$.  In our experiments, the memory allocation
scheme made very little difference regardless of heap variant, indicating that
this layer of optimization was superfluous. Thus, all the results in this paper use the eager
strategy.

The workloads we tested are described in the subsections below. These include
workloads generated by code sourced (with modifications) from
DIMACS implementation challenges~\cite{DIMACS5,DIMACS9}, as well as workloads
generated by a packet-level network simulator~\cite{htsim}.
All experiments use trace-based simulation.  More specifically, a workload is
generated once using a reference heap and the sequence of operations and values
is recorded in a trace file.  This trace file can then be executed against each
of many drivers---one for each heap variant included in the study---as well as a
dummy driver that simply parses the trace file but does not execute any heap
operations.  The dummy driver captures the overhead of the simulation and its
collected metrics are subtracted from those of the other drivers before any
comparisons are done. Wallclock time is measured by the driver itself.  For
purposes of timing, each execution of a trace file is run for a minimum of five
iterations and two seconds of wallclock time (whichever takes longer), and the
time is averaged over all iterations.  Other metrics are collected over the
course of a single iteration using \texttt{cachegrind}~\cite{cachegrind}, a cache and branch-prediction profiler.
The profiler simulates actual machine parameters and does not vary between
executions, providing accurate measurements that are isolated from other system
processes.  We have used it to collect dynamic instruction and
branching counts as well as reads, writes and misses for both the L1 and L2
caches. Additionally, \texttt{cachegrind} allows for simulating branch
prediction in a basic model (that does not correspond exactly to the real
machines); we have collected this misprediction count as well.

All experiments were run on a high-performance computing cluster in Princeton
consisting of Dell PowerEdge SC1435 nodes with dual AMD Opteron
2212 processors (dual-core, 2.0GHz, 64KB L1 cache and 1MB L2 cache per core) and
8GB of RAM (DDR2-667). The machines ran Springdale/PUIAS Linux (a Red-Hat
Enterprise clone) with kernel version 2.6.32.  All executions remained in-core.

\subsection{Artificial randomized workloads.}
\label{sec:artificial}

The first workload we consider is a standard and ubiquitous one: sorting sequences of
$n$ uniformly random integers.  This translates to $n$ random insertions
followed by $n$ minimum deletions in the trace files.

The next type of sequence intermixes insertions and deletions, but in a very
structured way which turns out to be degenerate.  It is a very
natural sequence to test, and due to its presence in the DIMACS test set, we
worry that its use in benchmarks may be more widespread than one might hope of a
broken test.  The sequence begins with $n$ random insertions as in the sorting
case.  It is then followed by $cn$ repetitions of the following: one random
insertion followed by one minimum deletion.  It is not hard to show that the evolving
distribution of keys remaining in the heap is far from uniform.

\begin{lemma}
\label{lem:rand-ins-del}
After the initial $n$ insertions and $cn$ iterations of insert-delete, the items
remaining in the heap consist of the $n$ largest keys inserted thus far.  The
next item inserted has roughly a $\nicefrac{c}{c+1}$ probability of becoming the
new minimum.
\end{lemma}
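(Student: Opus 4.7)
The plan is to prove the structural claim by induction on the number of insert-delete iterations, and then deduce the probability statement using elementary order-statistics reasoning on uniform samples.

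First I would establish the invariant: after the initial $n$ insertions followed by any number $k \in \{0, 1, \dots, cn\}$ of insert-delete iterations, the heap contains exactly the $n$ largest of the $n+k$ keys inserted so far. The base case $k=0$ is immediate, since the heap contains all $n$ inserted keys. For the inductive step, suppose the claim holds for $k$, and consider iteration $k+1$. After the insertion of a new key $x$, the heap holds $n+1$ items. I would argue by a short case analysis that these are the $n+1$ largest among the first $n+k+1$ inserted keys: if $x$ is at most the current minimum $m$ of the heap, then the top $n$ of the first $n+k+1$ keys coincide with the top $n$ of the first $n+k$, so appending $x$ yields the top $n+1$; otherwise $x$ displaces $m$ from the top-$n$ set, and $m$ becomes the $(n+1)$st largest, so again the heap holds the top $n+1$. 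The subsequent $\DeleteMinOp$ then removes the $(n+1)$st largest, leaving exactly the $n$ largest, as required.

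For the probability statement, I would specialize to the uniform-key model used in the workload and invoke standard facts about uniform order statistics. After the full sequence of $n$ initial insertions and $cn$ insert-delete pairs, the heap contents are the top $n$ of $N = (c+1)n$ i.i.d.\ uniform samples, so the current minimum of the heap is the $(cn+1)$st order statistic out of $N$ uniform samples. Its expected value is $(cn+1)/(N+1) = (cn+1)/((c+1)n+1) \to c/(c+1)$ as $n \to \infty$, and by standard concentration of order statistics this value is sharply concentrated. A newly inserted uniform key becomes the new minimum precisely when it falls below the current heap minimum, which happens with probability equal to that minimum's value; taking expectations (and using the concentration) gives roughly $c/(c+1)$.

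The argument is essentially self-contained, and I do not anticipate a serious obstacle. The only subtlety is making the word ``roughly'' in the second sentence precise: one must decide whether to state the conclusion in expectation, asymptotically in $n$, or with a concentration bound. I would opt for the cleanest route, computing the exact expectation $(cn+1)/((c+1)n+1)$ and remarking that this tends to $c/(c+1)$ with the usual $O(1/\sqrt n)$ fluctuations guaranteed by the variance of a central uniform order statistic.
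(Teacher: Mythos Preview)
Your proposal is correct and follows essentially the same approach as the paper: an induction on iterations for the structural invariant, followed by identifying the heap minimum with the $(cn+1)$st order statistic of $(c+1)n$ uniform samples to obtain the $\approx c/(c+1)$ probability. Your version is somewhat more careful---you spell out the case analysis in the inductive step and use the exact expectation $(cn+1)/((c+1)n+1)$ with a remark on concentration, whereas the paper's argument is terser and writes the expectation as $(cn+1)/((c+1)n)$---but the underlying ideas are identical.
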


\begin{proof}
For the purpose of this analysis, we consider the inserted keys to be
reals distributed uniformly at random in the range $\left[{0,1}\right]$, rather
than 32-bit integers. The pattern of operations leaves the
$n$ largest keys inserted thus far in the heap, as the following simple
inductive argument shows.  Initially, $n$ keys are inserted; being the
only keys thus far, they are trivially the largest.  Then, each iteration
consists of a single insertion followed by a minimum deletion.  Since there are
$n+1$ keys in the heap after the insertion, and the minimum is deleted, the
remaining $n$ keys are the largest thus far.

We can view the random variables of all keys inserted thus far to be
the collection $X_1, \dots, X_{\left({c+1}\right)n}$, and the current minimum in
the heap to be the $\left({cn + 1}\right)^{th}$ order statistic,
$X_{\left({cn+1}\right)}$.  The expectation of this variable is well-known:
$\ex{}{X_{\left({cn+1}\right)}} =
\nicefrac{cn+1}{\left({c+1}\right)n} \approx \nicefrac{c}{c+1}$.  From this we
deduce that the probability $p$ of the  next inserted key becoming the new
minimum is roughly $\nicefrac{c}{c+1}$.
\end{proof}

As $c$ grows, the most recent insertion becomes exceedingly likely to be the
next deleted item.  In other words, the behavior of the queue becomes
increasingly stack-like as the sequence lengthens.  On the other hand, if we introduce
$\DecreaseKeyOp$ operations to the sequence, we can ameliorate the degeneracy.
This brings us to our third type of artificial sequence.  We again build an
initial heap of size $n$ with random insertions.  We then perform $cn$
repetitions of the following: one random insertion, $k$ key decreases on random
nodes, and one minimum deletion.  We also consider two cases for the $k$
key decreases.  In the first, we decrease the key to some random number between
its current value and the minimum.  In the second, we decrease it so that it
becomes the new minimum.  We refer to these options as
``middle'' and ``min'', respectively.

In both the insertion-deletion workloads and the key-decrease workloads we consider $c \in \left\{{1,32,1024}\right\}$, while in the key-decrease workloads we also consider $k \in \left\{{1,32,1024}\right\}$.

\subsection{More realistic workloads.}

Of our remaining workloads, some are still artificial in the sense that they are
generated by running real algorithms on artificial inputs, but others make use
of real inputs.

The first two of these are Dijkstra's algorithm for single source shortest paths and the Nagamochi-Ibaraki algorithm for the min-cut problem.  We run both algorithms against well-structured or randomly generated graphs.  Dijkstra's algorithm in particular is run on several classes of graphs, including some which guarantee a $\DecreaseKeyOp$ operation for each edge.  Additionally, we run Dijkstra's algorithm on real road networks of different portions of the United States.

Our final set of trace files is generated from the \texttt{htsim} packet-level
network simulator \cite{htsim}, written by the authors of the multipath TCP
(MPTCP) protocol. The simulator models arbitrary networks using pipes (that add
delays) and queues (with fixed processing capacity and finite buffers), and
implements both TCP and MPTCP.  One of these workloads is based on real traffic
traces from the VL2 network~\cite{vl2}.

\section{Results}
\label{sec:results}

The results reveal a more nuanced truth than that which has been traditionally
accepted.  It is not true that implicit-4 heaps are optimal for all workloads,
nor is it true that Fibonacci heaps are always exceptionally slow.  We focus on
the most interesting cases here, and include the remaining results in the
\ifthenelse{\boolean{fullversion}}{appendix}{full version of our paper~\cite{fullversion}}.  We present most of our data in tables
sorted in ascending order of wallclock time.  Each table is for a single, large input file.  The
tables represent raw metrics divided by the minimum value attained by any heap,
such that a highlighted value of {\color{blue}\bf 1.00} is the minimum, while a
value $c$ is $c$ times the minimum.  These ratios make it
easier to interpret relative performance instead of the full counts.
In order to keep the tables compact, the column titles have been abbreviated:
\texttt{time} is wallclock time, \texttt{inst} is the dynamic instruction count,
\texttt{l1\_rd} and \texttt{l1\_wr} are the number of L1 reads and writes
respectively, \texttt{l2\_rd} and \texttt{l2\_wr} are the L2 reads and writes
respectively, \texttt{br} is the number of dynamic branches, and \texttt{l1\_m},
\texttt{l2\_m} and \texttt{br\_m} are the number of L1 misses, L2 misses, and
branch mispredictions.

\begin{figure*}
	\center
	\label{fig:log}
    \includegraphics[width=0.9\textwidth]{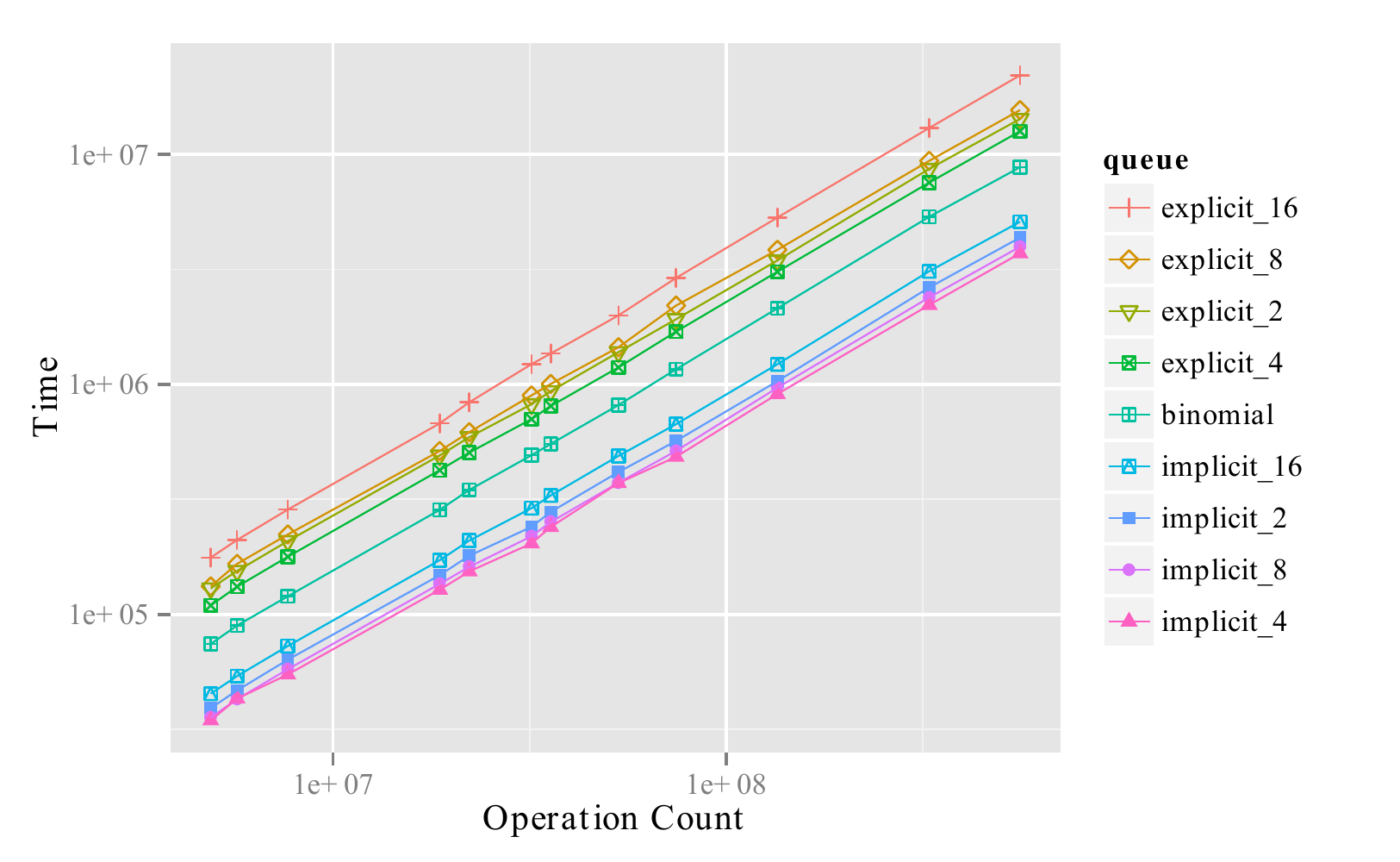}
	\caption{Dijkstra on the full USA road map. All operation counts are scaled by
	$\log n$.}
\end{figure*}

\begin{figure*}
	\center
	\label{fig:const}
    \includegraphics[width=0.9\textwidth]{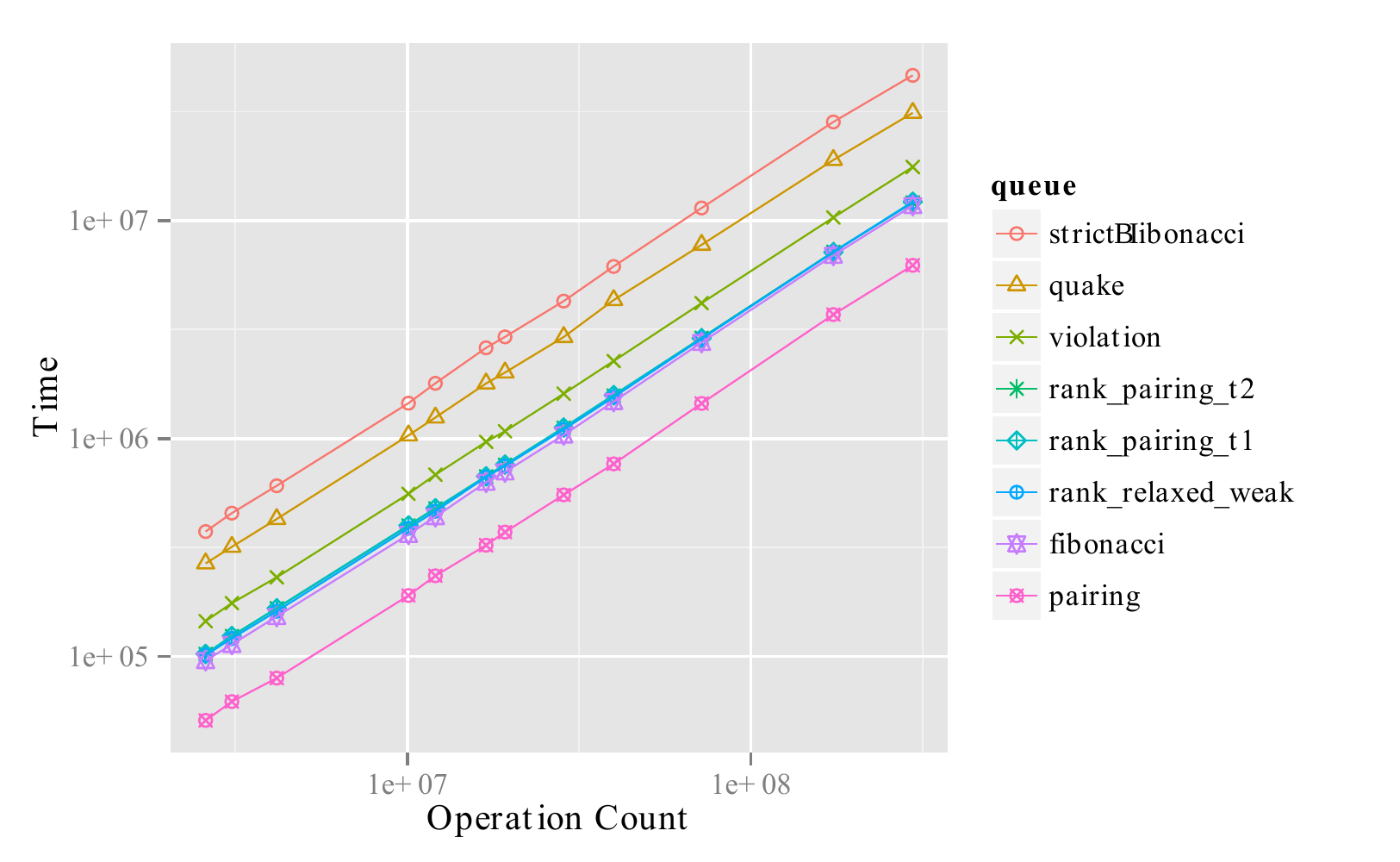}
	\caption{Dijkstra on the full USA road map. The $\DeleteMinOp$ count is scaled
	by $\log n$.}
\end{figure*}

We initially ran each experiment on many problem sizes.  We
found that in most cases the relative performance stabilized very quickly, so
from here on we only present data for the largest problem size.  See Figures 1
and 2 for some evidence of this stabilization.  The heaps are separated into two classes so as to unclutter the plots and give a consistent axis.
The operation counts are the sum of the counts of $\InsertOp$, $\DecreaseKeyOp$,
and $\DeleteMinOp$ operations.  In Figure 1, all operation counts are
scaled by $\log n$, where $n$ is the
average size of the heap.  In Figure 2, only the $\DeleteMinOp$ count is scaled
by $\log n$. This scaling approximately reflects the amortized bounds for each
heap.

Before diving into the results, we first make a high-level observation.
The number of L1 cache misses appears to be the metric most strongly correlated
with wallclock time.  It is not a perfect predictor, and inversions in ordering
certainly exist.  Some of these inversions can be explained by L2 cache misses,
write counts, or branch misprediction.  Others appear to be outliers or are
otherwise yet unexplained.

\begin{table*}
	\center
	\caption{Sorting}
	\label{tab:sorting}
	\input{pq_sort.4194304.1.txt}
\end{table*}

\begin{table*}
	\center
	\caption{Dijkstra -- full USA road map}
	\label{tab:usa}
	\input{usa.usa.txt}
\end{table*}

\subsection{\bf Conventional wisdom holds.}

We first examine two cases where the conventional wisdom holds.  As seen
in Table \ref{tab:sorting}, the implicit-simple heaps handle sorting workloads
very well.  The best performance is achieved by the implicit-simple-4 heap.  The
Fibonacci heap is almost seven times as slow as the fastest, which does indeed
echo old complaints about its speed.  The pairing heap and binomial queue fare
better here, but still poorly at at least four times as slow as the fastest.
Without any key decreases, the rank-relaxed weak queue is essentially just an
alternate implementation of a binomial queue, so it is not terribly surprising
that it does better than the Fibonacci heap.

Similarly with Dijkstra's algorithm on the full USA road map (Table
\ref{tab:usa}), we see implicit-4 heaps performing quite well, while Fibonacci
heaps are roughly three times as slow.  The explicit heaps are noticeably slower
even than Fibonacci heaps, and the only Fibonacci relaxation to perform well
here is the pairing heap.  The others are in fact {\em slower} than their
conceptual ancestor.  Although they exhibit similar caching behavior, their
code is somewhat more complicated, which may be contributing to the slowdown.

Both of the above workloads are very well-studied, and as such the
relative performance of the older heap variants should not be very surprising.

\begin{table*}
	\center
	\caption{Randomized $\InsertOp$--$\DeleteMinOp$ (Degenerate) -- $c=1024$}
	\label{tab:id_long}
	\input{pq_id_long.131072.1.txt}
\end{table*}

\subsection{\bf Degenerate results.}

We now turn to our randomized
insertion-deletion workload.  The results here are more surprising.  Recall
from Lemma~\ref{lem:rand-ins-del} that this workload is degenerate, in
that as the sequence goes on, the most recently inserted item is very
likely to be the next item deleted. Nevertheless, this sequence is commonly used
in empirical studies.
The shortest sequence we tested, $c=1$
\ifthenelse{\boolean{fullversion}}{(Table \ref{tab:id_short})}{},
remains rather close to the sorting workload.  On the other hand, when the
sequence is very long ($c=1024$), as shown in Table \ref{tab:id_long}, we
see a very different picture.  The queue-based structures outperform the
implicit heaps by a factor of at least two.  Under these assumptions about the
distribution, an $\InsertOp$ operation in a $d$-ary heap results in the
node being sifted all the way to the top, and the subsequent $\DeleteMinOp$ on
average results in another long sifting sequence.  In a queue structure with lazy insertion, the $\InsertOp$
commonly results in a singleton node which is simply removed afterwards with little to no restructuring.

Although degenerate in the above case, a generalization of this
sequence becomes a natural sequence for which efficient structures have been
designed. Consider workloads which frequently insert new
items near the minimum rather than toward the bottom of the heap.  Let
$\rank{x}$ denote the rank of $x$ among the items in the heap, such that the rank of the minimum is $1$ and the maximum is $n$.  Similarly let $\maxrank{x}$
be the
maximum value of $\rank{x}$ over the lifetime of $x$ in the heap.
Then there are structures which are optimized for both the case of frequently
deleting small-rank items and the case of frequently deleting large-rank items.
The fishspear data structure achieves an $\OO{\log \maxrank{x}}$ bound for
deletion, while rank-sensitive priority queues achieve an $\OO{\log
\left({\nicefrac{n}{\rank{x}}}\right)}$ bound
\cite{FischerSpear,DeanRankSensitive}.  Additonally, pairing heaps have been
shown to support $\DeleteMinOp$ in $\OO{\log k}$ time where $k$ is the number of
heap operations since the minimum item was inserted \cite{IaconoBound}.

The event simulation literature, largely orthogonal to the theory literature,
includes more sophisticated random models for generating insertion-deletion
workloads.  One in particular to note is the so-called ``classic hold'' model
which is essentially the same as the degenerate model, except that instead of
inserting a completely random key in each iteration, the new key is equal to the
most recently deleted key plus a positive random value.  This avoids the
degeneracy.  This and other models were explored in a previous experimental
study \cite{RonngrenStudy}.  That study also considers several special-case
priority queues with poor theoretical bounds (e.g., $\omega(\log n)$) which
nonetheless perform quite well for event simulation workloads.

\begin{table*}
	\center
	\caption{Randomized $\DecreaseKeyOp$ -- Middle, $c=1$, $k=1$}
	\label{tab:dcr_one_short}
	\input{pq_dcr_one_short.8388608.1.txt}
\end{table*}

\begin{table*}
	\center
	\caption{Randomized $\DecreaseKeyOp$ -- Min, $c=1$, $k=1$}
	\label{tab:dcr_min_one_short}
	\input{pq_dcr_min_one_short.8388608.1.txt}
\end{table*}

\subsection{\bf Surpising results.}

As noted in our discussion of the workloads,
adding even a single key decrease per iteration to the random sequences lessens
the degeneracy.  Furthermore in Table \ref{tab:dcr_one_short} we see that if the
key decreases do not always generate a new minimum, as
would be the case in many applications such as graph search, then implicit heaps
with large branching factors continue to perform well.  When the key decreases always produce new minima, the amortized structures come out ahead, while worst-case structures
(implicit heaps and binomial queues included) fare poorly, as shown in Table
\ref{tab:dcr_min_one_short}.  As these sequences get longer, e.g. $c=1024$ and
$k=1$ \ifthenelse{\boolean{fullversion}}{(Table \ref{tab:dcr_min_one_long})}{}, the Fibonacci relaxations
gain ground, with rank-pairing-t1 heaps surpassing Fibonacci heaps.  We note
that the change in performance coincides with a large gap in L2 cache misses,
and is likely due to the long sifting process in $d$-ary heaps.

\begin{table*}
	\center
	\caption{Randomized $\DecreaseKeyOp$ -- Min, $c=1$, $k=1024$}
	\label{tab:dcr_min_many_short}
	\input{pq_dcr_min_many_short.262144.1.txt}
\end{table*}

If we increase the density of key decreases in the sequence, then we see
something strange (Table \ref{tab:dcr_min_many_short}).  Suddenly the $d$-ary
heaps are doing well, and in particular the explicit heaps outperform the
implicit ones.  One possible explanation for this is that the level of
indirection in the implicit heap implementations requires them to not only
touch the same allocated nodes that the explicit heaps touch, but also to jump
around in the structural array while doing path traversals.  Noting that
implicit and explicit heaps have a similar number of L1 misses, this is one of
the few other workloads for which L2 behavior is a better performance predictor.

As to why the $d$-ary heaps outperform the amortized structures, consider the
overall pattern.  If many nodes have their keys decreased in a pairing heap, for
instance, then their subtrees are simply reattached underneath the root.
Each node access is likely to trigger a cache miss, as there will be little
revisiting of nodes other than the root, and the subsequent minimum deletion will
have to examine each of these nodes again in order to restructure the tree.
On the other hand, in the $d$-ary heaps, all the sifting is along ancestral
paths which share many nodes between operations, and hence the caching effects
are more favorable.

\subsection{\bf Other workloads.}

\ifthenelse{\boolean{fullversion}}{}{The full version of our paper contains a complete set of data and further
discussion of results~\cite{fullversion}.}  Among the other workloads we tested,
those that generate relatively small heap sizes
\ifthenelse{\boolean{fullversion}}{(e.g.\ Tables \ref{tab:grid_phard}, \ref{tab:grid_ssquare})}{}
favor the implicit heaps, while those operating on larger heap
sizes
\ifthenelse{\boolean{fullversion}}{(e.g.\ Tables \ref{tab:acyc_pos}, \ref{tab:grid_slong}, \ref{tab:grid_ssquare_s}, \ref{tab:rand_1_4}, \ref{tab:rand_4})}{}
favor amortized structures, especially
the pairing heap.  Very dense key-decrease workloads, including some ``bad''
inputs for Dijkstra's algorithm \ifthenelse{\boolean{fullversion}}{(Tables \ref{tab:spbad_dense} and
\ref{tab:spbad_sparse})}{} and the Nagamochi-Ibaraki workloads,
\ifthenelse{\boolean{fullversion}}{(Table \ref{tab:nix})}{}
favor implicit and explicit heaps.  The network
simulation workloads,
\ifthenelse{\boolean{fullversion}}{(Tables \ref{tab:htsim_ecmp_perm_fat}--\ref{tab:htsim_mptcp_vl2_over})}{}
which produce relatively small heap sizes and have no key decreases, all favor implicit-simple
heaps.

\section{Sanity Checks}
\label{sec:sanity}

We performed a few auxiliary experiments to verify our findings in the previous section.

\subsection{Testing the caching hypothesis.}

In order to lend some credence to our claim that caching is the primary predictor of performance in many of these test cases, we ran a few additional tests, tweaking the parameters of our implementations.  We added an extra padding field to the node in our pairing heap and implicit-4 heap implementations.  The extra field does not generate additional instructions in the code other than in the original memory allocation process (not included in the timing procedures) and thus the only change should be in the memory address allocated to the nodes.  This can affect both caching and branch prediction.  Through repeated doubling of node size, we find that even though the dynamic instruction count does not grow, the wall-clock time does---in fact, it grows roughly in proportion to the cache miss rate.  A less-pronounced effect also accompanies the growth of the misprediction rate.

\ignore{
\begin{table}
	\center
	\caption{Increasing node size to test caching effects.}
	\begin{tabular}{c|ccc|ccc}
		& \multicolumn{3}{c}{implicit} & \multicolumn{3}{c}{pairing} \\
		node size & time & rd & wr & time & rd & wr \\ \hline
		64 & 1,388,674 & 78,253,202 & 6,291,473 & 538,713 & 27,603,707 & 10,605,794 \\
		128 & 1,494,682 & 86,145,175 & 10,504,143 & 730,097 & 29,908,790 & 15,340,588 \\
		256 & 1,786,634 & 103,205,956 & 18,887,945 & 990,593 & 29,901,246 & 23,729,588 \\
	\end{tabular}
\end{table}
}

\begin{table}
	\center
	\caption{Tweaking node size to test caching effects.}
	\label{tab:cache}
	\vspace{5pt}
	\begin{tabular}{c|ccc|ccc}
		& \multicolumn{3}{c}{implicit} & \multicolumn{3}{c}{pairing} \\
		node size & time & rd & wr & time & rd & wr \\ \hline
		1.00 & 1.00 & 1.00 & 1.00 & 1.00 & 1.00 & 1.00 \\
		2.00 & 1.08 & 1.10 & 1.67 & 1.36 & 1.08 & 1.45 \\
		4.00 & 1.29 & 1.32 & 3.00 & 1.84 & 1.08 & 2.24 \\
	\end{tabular}
\end{table}

One potentially interesting observation from these experiments is
this: the instruction patterns of pairing heaps is \textit{write-first}, while
that of implicit heaps is \textit{read-first}.  By this we mean that typically,
whenever an implicit heap touches a node, it does so first via a read, while a
pairing heap quite often simply overwrites data in the node without reading it.
This means that the cache behavior for pairing heaps is skewed toward write
misses, while implicit heaps are skewed toward read misses.
Table~\ref{tab:cache} shows the read and write miss rates for both heaps.

\subsection{Comparison to an existing implementation.}

We ran a few
experiments against Sanders' implementation of the sequence
heap~\cite{SandersSequence}, which has a reputation of being
hard to beat in practice.  This gives us an easy way to benchmark our own
untuned implementations to see how they compare against a well-engineered one.
The results were encouraging.

Of the four workloads we tested, the sequence heap was faster than any of our
implementations on two of them, while it was slower on the other two.  More
specifically, the sequence heap was $1.97$ times faster than the
implicit-simple-4 heap on the sorting workload, and a significant $3.69$ times
faster than the pairing heap on the randomized insertion-deletion workload with
$c=32$.  Our pairing heap implementation performed $1.36$ times faster than the
sequence heap on the insertion-deletion workload with $c=1024$, and the
implicit-simple-2 heap was $1.15$ times faster on one of the network simulator
workloads.

\section{Remarks}
\label{sec:remarks}

As declared in the introduction, this is by no means a final study.  The push in
the past decade for better-performing Fibonacci-like heaps, while it may have
led to theoretical simplifications, does not seem to have yielded obvious
practical benefits. The results show that the optimal choice of implementation
is strongly input-dependent.  Furthermore, it shows that care must be taken to
optimize for cache performance, primarily at the L1-L2 barrier.  This suggests
that complicated, cache-oblivious structures are unlikely to perform well
compared to simpler, cache-aware structures.  Some obvious
candidates for renewed testing are sequence heaps and B-heaps~\cite{BHeap}.
Another obvious direction for future work is to explore other classes of workloads.

We hope that our study gives future theorists and practitioners a new
outlook on the state of affairs.  Unfortunately, there is no simple answer to
which heap should be used when.  Picking the best tool for the job will likely
require experimentation between existing implementations or careful analysis of the
expected workload's caching behavior against each heap. To this end, we hope
that our simple implementations of various heap structures will serve as a
useful resource.

\ignore{
Finally, we advocate teaching additional variants at the undergraduate
level, rather than letting the implicit heaps hog the spotlight. \comment{Do we have
evidence that this is the case? It's not true at Princeton\ldots} The pairing
heap quite often out-performs even the implicit-4 heap, and cache-aware
structures offer hope of doing even better.
}

\subsection*{Acknowledgments.} We would like to thank Jeff Erickson
for the considerable guidance he provided the first author in the initial stages of this
project.  Some of the compute cluster resources at
Princeton were donated by Yahoo!.

\balance

\setlength{\bibsep}{5pt}
\bibliographystyle{plain}
\bibliography{references}

\begin{thebibliography}{10}

\bibitem{code}
Priority queue testing.
\newblock \url{http://code.google.com/p/priority-queue-testing/}.

\bibitem{BrodalStrict}
Gerth~S. Brodal, George Lagogiannis, and Robert~E. Tarjan.
\newblock Strict {Fibonacci} heaps.
\newblock In {\em Proc. 44th Annual ACM Symposium on Theory of Computing},
  pages 1177--1184, 2012.

\bibitem{BruunPolicy}
Asger Bruun, Stefan Edelkamp, Jyrki Katajainen, and Jen Rasmussen.
\newblock Policy-based benchmarking of weak heaps and their relatives.
\newblock In {\em Proc. 9th Annual International Symposium on Experimental
  Algorithms}, pages 424--435. 2010.

\bibitem{cachegrind}
Cachegrind: {A} cache and branch-prediction profiler.
\newblock \url{http://valgrind.org/docs/manual/cg-manual.html}.

\bibitem{ChanQuake}
Timothy~M. Chan.
\newblock Quake heaps: a simple alternative to {Fibonacci} heaps, 2009.

\bibitem{ChowdhuryBuffer}
Rezaul~A. Chowdhury.
\newblock {\em Cache-efficient Algorithms and Data Structures: Theory and
  Experimental Evaluation}.
\newblock PhD thesis, The University of Texas at Austin, 2007.

\bibitem{htsim}
Raiciu Costin and Mark Handley.
\newblock Multipath {TCP} implementations.
\newblock \url{http://nrg.cs.ucl.ac.uk/mptcp/implementation.html}.

\bibitem{DeanRankSensitive}
Brian~C. Dean and Zachary~H. Jones.
\newblock Rank-sensitive priority queues.
\newblock In {\em Proc. of the 11th International Symposium on Algorithms and
  Data Structures}, pages 181--192, 2009.

\bibitem{DIMACS5}
{DIMACS}.
\newblock 5th {DIMACS} challenge: Priority queues.
\newblock \url{http://www.cs.amherst.edu/~ccm/challenge5/p_queue/index.html}.

\bibitem{DIMACS9}
{DIMACS}.
\newblock 9th {DIMACS} implementation challenge: Shortest paths.
\newblock \url{http://www.dis.uniroma1.it/~challenge9/download.shtml}.

\bibitem{EdelkampWeak}
Stefan Edelkamp, Amr Elmasry, and Jyrki Katajainen.
\newblock The weak-heap family of priority queues in theory and praxis.
\newblock In {\em Proc. of the 18th Computing: The Australasian Theory
  Symposium}, pages 103--112, 2012.

\bibitem{ElmasryViolation}
Amr Elmasry.
\newblock The violation heap: A relaxed {Fibonacci}-like heap.
\newblock In {\em Proc. 16th Annual International Conference on Computing and
  Combinatorics}, pages 479--488, 2010.

\bibitem{ElmasryFat}
Amr Elmasry and Jyrki Katajainen.
\newblock Fat heaps without regular counters.
\newblock In {\em Proc. 6th Workshop on Algorithms and Computation}, pages
  173--185. Springer, 2012.

\bibitem{FischerSpear}
Michael~J. Fischer and Michael~S. Paterson.
\newblock Fishspear: a priority queue algorithm.
\newblock {\em J. ACM}, 41(1):3--30, 1994.

\bibitem{FredmanBound}
Michael~L. Fredman.
\newblock On the efficiency of pairing heaps and related data structures.
\newblock {\em J. ACM}, 46(4):473--501, 1999.

\bibitem{FredmanPairing}
Michael~L. Fredman, Robert Sedgewick, Daniel~D. Sleator, and Robert~E. Tarjan.
\newblock The pairing heap: A new form of self-adjusting heap.
\newblock {\em Algorithmica}, 1(1--4):111--129, 1986.

\bibitem{FredmanFibonacci}
Michael~L. Fredman and Robert~E. Tarjan.
\newblock Fibonacci heaps and their uses in improved network optimization
  algorithms.
\newblock {\em J. ACM}, 34(3):596--615, 1987.

\bibitem{vl2}
Albert Greenberg, James~R. Hamilton, Navendu Jain, Srikanth Kandula, Changhoon
  Kim, Parantap Lahiri, David~A. Maltz, Parveen Patel, and Sudipta Sengupta.
\newblock {VL2}: {A} scalable and flexible data center network.
\newblock {\em ACM SIGCOMM Computer Communication Review}, 39(4):51--62, 2009.

\bibitem{HaeuplerRankPairing}
Bernhard Haeupler, Siddhartha Sen, and Robert~E. Tarjan.
\newblock Rank-pairing heaps.
\newblock {\em SIAM J. Computing}, 40(6):1463--1485, 2011.

\bibitem{ecmp}
C.~Hopps.
\newblock Analysis of an {E}qual-{C}ost {M}ulti-{P}ath algorithm.
\newblock RFC 2992, Network Working Group, 2000.

\bibitem{IaconoBound}
John Iacono.
\newblock Improved upper bounds for pairing heaps.
\newblock {\em CoRR}, abs/1110.4428, 2011.

\bibitem{BHeap}
Poul-Henning Kamp.
\newblock You're doing it wrong.
\newblock {\em ACM Queue: Tomorrow's Computing Today}, 8(6), 2010.

\bibitem{KaplanThin}
Haim Kaplan and Robert~E. Tarjan.
\newblock Thin heaps, thick heaps.
\newblock {\em ACM Trans. on Algorithms}, 4(1), 2008.

\bibitem{KnuthSorting}
Donald~E. Knuth.
\newblock {\em The art of computer programming, volume 3: sorting and searching
  (2nd ed.)}.
\newblock Addison Wesley Longman Publishing Co., Inc., 1998.

\bibitem{LadnerCacheHeaps}
Anthony LaMarca and Richard~E. Ladner.
\newblock The influence of caches on the performance of heaps.
\newblock {\em ACM J. of Experimental Algorithmics}, 1(4), 1996.

\bibitem{MoretExp}
Bernard M.~E. Moret and Henry~D. Shapiro.
\newblock An empirical analysis of algorithms for constructing a minimum
  spanning tree.
\newblock In {\em Algorithms and Data Structures}, volume 519 of {\em LNCS},
  pages 400--411. Springer, 1991.

\bibitem{PettieBound}
Seth Pettie.
\newblock Towards a final analysis of pairing heaps.
\newblock In {\em Proc. of 46th Annual IEEE Symposium on Foundations of
  Computer Science}, pages 174--183, 2005.

\bibitem{RonngrenStudy}
Robert R\"{o}nngren and Rassul Ayani.
\newblock A comparative study of parallel and sequential priority queue
  algorithms.
\newblock {\em ACM Trans. on Modeling and Computer Simulation}, 7(2):157--209,
  1997.

\bibitem{SandersSequence}
Peter Sanders.
\newblock Fast priority queues for cached memory.
\newblock {\em ACM J. of Experimental Algorithmics}, 5(7), 2000.

\bibitem{SenScalable}
Sen Siddhartha, David Shue, Sunghwan Ihm, and Michael~J. Freedman.
\newblock Scalable, optimal flow routing in datacenters via local link
  balancing.
\newblock In {\em Proc. International Conference on emerging Networking
  EXperiments and Technologies}, 2013.

\bibitem{StaskoExp}
John~T. Stasko and Jeffrey~S. Vitter.
\newblock Pairing heaps: experiments and analysis.
\newblock {\em Commun. ACM}, 30(3):234--249, 1987.

\bibitem{Takaoka23}
Tadao Takaoka.
\newblock Theory of 2-3 heaps.
\newblock In {\em Proc. 5th Annual International Conference on Computing and
  Combinatorics}, pages 41--50, 1999.

\bibitem{VuilleminBinomial}
Jean Vuillemin.
\newblock A data structure for manipulating priority queues.
\newblock {\em Commun. ACM}, 21:309--315, 1978.

\bibitem{WilliamsBinary}
John W.~J. Williams.
\newblock Algorithm 232 heapsort.
\newblock {\em Commun. ACM}, 7(6):347--349, 1964.

\end{thebibliography}

\ignore{

}

\ifthenelse{\boolean{fullversion}}
{
\clearpage
\section*{Appendix}

Below we have included the data tables for the rest of our results.
Due to space constraints, we were not able to discuss all of them in detail, but
we offer a brief description of the workloads involved.

\paragraph{\bf Dijkstra on contrived graphs.}  In Tables
\ref{tab:acyc_pos}--\ref{tab:spbad_sparse}, we see the results of running
Dijkstra's shortest paths algorithm on graphs from the DIMACS generators.  In
general these are well-structured graphs, though some of them do contain
randomness. A more complete description of the generators is in
the DIMACS codebase \cite{DIMACS5,DIMACS9}.

\paragraph{\bf Nagamochi-Ibaraki.}  In Table \ref{tab:nix} we see the results of running the Nagamochi-Ibaraki algorithm for the minimum-cut problem on random graphs (again from the DIMACS generators).

\paragraph{\bf Further artificial workloads.}  In Tables
\ref{tab:id_short}--\ref{tab:dcr_min_many_long} we see the results of the rest
of our artificial workloads.  These include extra settings of the $c$ and $k$
parameters described in Section~\ref{sec:artificial}.

\paragraph{\bf Network event simulation.}  In tables
\ref{tab:htsim_ecmp_perm_fat}--\ref{tab:htsim_mptcp_vl2_over} we see the results
of running different protocols and workloads on the \texttt{htsim} network
simulator.  They vary between the use of the Equal-Cost Multi-Path
(ECMP)~\cite{ecmp} or MPTCP protocols, a permutation traffic matrix or
traffic from VL2 network traces~\cite{vl2}, and a 128-node fat-tree or a
512-node, 4-to-1 oversubscribed fat-tree network topology.  See
\cite{SenScalable} for further discussion of these inputs.

{\setlength{\tabcolsep}{0.1em}
\begin{table}[h]
	\center
	\caption{Dijkstra -- \texttt{acyc\_pos} graphs}
	\label{tab:acyc_pos}
	\input{acyc_pos.8388608.1.txt}
\end{table}}

{\setlength{\tabcolsep}{0.1em}
\begin{table}
	\center
	\caption{Dijkstra -- \texttt{grid\_phard} graphs}
	\label{tab:grid_phard}
	\input{grid_phard.262144.1.txt}
\end{table}}

{\setlength{\tabcolsep}{0.1em}
\begin{table}
	\center
	\caption{Dijkstra -- \texttt{grid\_slong} graphs}
	\label{tab:grid_slong}
	\input{grid_slong.262144.1.txt}
\end{table}}

{\setlength{\tabcolsep}{0.1em}
\begin{table}
	\center
	\caption{Dijkstra -- \texttt{grid\_ssquare} graphs}
	\label{tab:grid_ssquare}
	\input{grid_ssquare.8192.1.txt}
\end{table}}

{\setlength{\tabcolsep}{0.1em}
\begin{table}
	\center
	\caption{Dijkstra -- \texttt{grid\_ssquare\_s} graphs}
	\label{tab:grid_ssquare_s}
	\input{grid_ssquare_s.2048.1.txt}
\end{table}}

{\setlength{\tabcolsep}{0.1em}
\begin{table}
	\center
	\caption{Dijkstra -- \texttt{rand\_1\_4} graphs}
	\label{tab:rand_1_4}
	\input{rand_1_4.262144.1.txt}
\end{table}}

{\setlength{\tabcolsep}{0.1em}
\begin{table}
	\center
	\caption{Dijkstra -- \texttt{rand\_4} graphs}
	\label{tab:rand_4}
	\input{rand_4.8388608.1.txt}
\end{table}}

{\setlength{\tabcolsep}{0.1em}
\begin{table}
	\center
	\caption{Dijkstra -- \texttt{spbad\_dense} graphs}
	\label{tab:spbad_dense}
	\input{spbad_dense.16384.1.txt}
\end{table}}

{\setlength{\tabcolsep}{0.1em}
\begin{table}
	\center
	\caption{Dijkstra -- \texttt{spbad\_sparse} graphs}
	\label{tab:spbad_sparse}
	\input{spbad_sparse.524288.1.txt}
\end{table}}

{\setlength{\tabcolsep}{0.1em}
\begin{table}
	\center
	\caption{Nagamochi-Ibaraki}
	\label{tab:nix}
	\input{nix.131072.1.txt}
\end{table}}

{\setlength{\tabcolsep}{0.1em}
\begin{table}
	\center
	\caption{Randomized $\InsertOp$--$\DeleteMinOp$ (Broken) -- $c=1$}
	\label{tab:id_short}
	\input{pq_id_short.4194304.1.txt}
\end{table}}

{\setlength{\tabcolsep}{0.1em}
\begin{table}
	\center
	\caption{Randomized $\InsertOp$--$\DeleteMinOp$ (Broken) -- $c=32$}
	\label{tab:id_medium}
	\input{pq_id_medium.4194304.1.txt}
\end{table}}

{\setlength{\tabcolsep}{0.1em}
\begin{table}
	\center
	\caption{Randomized $\DecreaseKeyOp$ -- Middle, $c=32$, $k=1$}
	\label{tab:dcr_one_medium}
	\input{pq_dcr_one_medium.4194304.1.txt}
\end{table}}

{\setlength{\tabcolsep}{0.1em}
\begin{table}
	\center
	\caption{Randomized $\DecreaseKeyOp$ -- Middle, $c=1024$, $k=1$}
	\label{tab:dcr_one_long}
	\input{pq_dcr_one_long.131072.1.txt}
\end{table}}

{\setlength{\tabcolsep}{0.1em}
\begin{table}
	\center
	\caption{Randomized $\DecreaseKeyOp$ -- Middle, $c=1$, $k=32$}
	\label{tab:dcr_few_short}
	\input{pq_dcr_few_short.4194304.1.txt}
\end{table}}

{\setlength{\tabcolsep}{0.1em}
\begin{table}
	\center
	\caption{Randomized $\DecreaseKeyOp$ -- Middle, $c=32$, $k=32$}
	\label{tab:dcr_few_medium}
	\input{pq_dcr_few_medium.262144.1.txt}
\end{table}}

{\setlength{\tabcolsep}{0.1em}
\begin{table}
	\center
	\caption{Randomized $\DecreaseKeyOp$ -- Middle, $c=1024$, $k=32$}
	\label{tab:dcr_few_long}
	\input{pq_dcr_few_long.8192.1.txt}
\end{table}}

{\setlength{\tabcolsep}{0.1em}
\begin{table}
	\center
	\caption{Randomized $\DecreaseKeyOp$ -- Middle, $c=1$, $k=1024$}
	\label{tab:dcr_many_short}
	\input{pq_dcr_many_short.262144.1.txt}
\end{table}}

{\setlength{\tabcolsep}{0.1em}
\begin{table}
	\center
	\caption{Randomized $\DecreaseKeyOp$ -- Middle, $c=32$, $k=1024$}
	\label{tab:dcr_many_medium}
	\input{pq_dcr_many_medium.8192.1.txt}
\end{table}}

{\setlength{\tabcolsep}{0.1em}
\begin{table}
	\center
	\caption{Randomized $\DecreaseKeyOp$ -- Middle, $c=1024$, $k=1024$}
	\label{tab:dcr_many_long}
	\input{pq_dcr_many_long.256.1.txt}
\end{table}}

{\setlength{\tabcolsep}{0.1em}
\begin{table}
	\center
	\caption{Randomized $\DecreaseKeyOp$ -- Min, $c=32$, $k=1$}
	\label{tab:dcr_min_one_medium}
	\input{pq_dcr_min_one_medium.4194304.1.txt}
\end{table}}

{\setlength{\tabcolsep}{0.1em}
\begin{table}
	\center
	\caption{Randomized $\DecreaseKeyOp$ -- Min, $c=1024$, $k=1$}
	\label{tab:dcr_min_one_long}
	\input{pq_dcr_min_one_long.131072.1.txt}
\end{table}}

{\setlength{\tabcolsep}{0.1em}
\begin{table}
	\center
	\caption{Randomized $\DecreaseKeyOp$ -- Min, $c=1$, $k=32$}
	\label{tab:dcr_min_few_short}
	\input{pq_dcr_min_few_short.8388608.1.txt}
\end{table}}

{\setlength{\tabcolsep}{0.1em}
\begin{table}
	\center
	\caption{Randomized $\DecreaseKeyOp$ -- Min, $c=32$, $k=32$}
	\label{tab:dcr_min_few_medium}
	\input{pq_dcr_min_few_medium.262144.1.txt}
\end{table}}

{\setlength{\tabcolsep}{0.1em}
\begin{table}
	\center
	\caption{Randomized $\DecreaseKeyOp$ -- Min, $c=1024$, $k=32$}
	\label{tab:dcr_min_few_long}
	\input{pq_dcr_min_few_long.8192.1.txt}
\end{table}}

{\setlength{\tabcolsep}{0.1em}
\begin{table}
	\center
	\caption{Randomized $\DecreaseKeyOp$ -- Min, $c=32$, $k=1024$}
	\label{tab:dcr_min_many_medium}
	\input{pq_dcr_min_many_medium.8192.1.txt}
\end{table}}

{\setlength{\tabcolsep}{0.1em}
\begin{table}
	\center
	\caption{Randomized $\DecreaseKeyOp$ -- Min, $c=1024$, $k=1024$}
	\label{tab:dcr_min_many_long}
	\input{pq_dcr_min_many_long.256.1.txt}
\end{table}}

{\setlength{\tabcolsep}{0.1em}
\begin{table}
	\center
	\caption{Network Simulation -- ECMP, PERM, fat-tree}
	\label{tab:htsim_ecmp_perm_fat}
	\input{des_ecmp_perm_128_fattree.txt}
\end{table}}

{\setlength{\tabcolsep}{0.1em}
\begin{table}
	\center
	\caption{Network Simulation -- ECMP, PERM, over fat-tree}
	\label{tab:htsim_ecmp_perm_over}
	\input{des_ecmp_perm_512_overfattree.txt}
\end{table}}

{\setlength{\tabcolsep}{0.1em}
\begin{table}
	\center
	\caption{Network Simulation -- ECMP, VL2, over fat-tree}
	\label{tab:htsim_ecmp_vl2_over}
	\input{des_ecmp_vl2_512_overfattree.txt}
\end{table}}

{\setlength{\tabcolsep}{0.1em}
\begin{table}
	\center
	\caption{Network Simulation -- MPTCP, PERM, fat-tree}
	\label{tab:htsim_mptcp_perm_fat}
	\input{des_mptcp_perm_128_fattree.txt}
\end{table}}

{\setlength{\tabcolsep}{0.1em}
\begin{table}
	\center
	\caption{Network Simulation -- MPTCP, PERM, over fat-tree}
	\label{tab:htsim_mptcp_perm_over}
	\input{des_mptcp_perm_512_overfattree.txt}
\end{table}}

{\setlength{\tabcolsep}{0.1em}
\begin{table}
	\center
	\caption{Network Simulation -- MPTCP, VL2, over fat-tree}
	\label{tab:htsim_mptcp_vl2_over}
	\input{des_mptcp_vl2_512_overfattree.txt}
\end{table}}
}
{ 
}

\end{document}